\theoremstyle{definition}
\newtheorem{theorem}{Theorem}
\newtheorem{corollary}[theorem]{Corollary}
\newtheorem{lemma}{Lemma}
\newtheorem{remark}{Remark}
\theoremstyle{definition}
\newtheorem{example}{Example}
\newcommand{\AK}{}
\newcommand{\SEI}{}
\newcommand{\AKL}{}
\begin{document}

\title{Calculating the normalising constant of the Bingham distribution 
on the sphere using the holonomic gradient method}

\author{
Tomonari Sei \and Alfred Kume
}

\author{Tomonari Sei\footnote{Department of Mathematics, Keio University, Japan.}
\ and Alfred Kume\footnote{SMSAS, University of Kent, UK.} }

\date{}
\maketitle

\begin{abstract}
In this paper we implement the holonomic gradient method to exactly compute the normalising constant of Bingham distributions. This idea is originally applied for general Fisher-Bingham distributions in \cite{hgd}. In this paper we explicitly apply this algorithm to show the exact calculation of the normalising constant; derive explicitly the Pfaffian system for this parametric case; implement the general approach for the maximum likelihood solution search and finally adjust the method for degenerate cases, namely when the parameter values have multiplicities.

\medskip
\noindent{\it Keywords}: Bingham distributions, directional statistics, holonomic functions.
\end{abstract}

\section{Introduction}

Let $p\geq 2$ and $S^{p-1}=\{x\in \mathbb{R}^p\mid x^{\top}x=1\}$,
the unit sphere in the $p$-dimensional Euclidean space.
\SEI{Let $dx$ be the uniform measure on $S^{p-1}$ with $\int_{S^{p-1}}dx=2\pi^{p/2}/\Gamma(p/2)$.}
Then modulo an orthogonal transformation in $S^{p-1}$, the Bingham distribution has density function with respect to $dx$ on $S^{p-1}$ as
\begin{align}
 f(x|\theta)
 &= \frac{1}{C(\theta)}e^{\sum_{i=1}^p\theta_ix_i^2},
\end{align}
where $\theta=(\theta_1,\ldots,\theta_p)^{\top}$
is the parameter and $C(\theta)$ is the normalising constant
\begin{align}
 C(\theta)
 &= \int_{S^{p-1}} e^{\sum_{i=1}^p\theta_i x_i^2} dx.
 \label{eq:Bingham}
\end{align}

Simple arguments confirm that for any $c \in \mathbb{R}$
\[
 C(\theta) e^{c}=  C(\theta+c)
\]
where $\theta+c=(\theta_1+c,\ldots,\theta_p+c)^{\top}$. Hence without loss of generality we can assume that $\theta_i$ can be all positive.

\cite{KumeWood2005} show that $C(\theta)$ is actually closely related to a particular value of the density of a random variable defined as a linear combination of the $p$ independent $\chi_1^2$ random variables. In particular, using the Laplace-transform inversion arguments one can show the following one-dimensional representation of $C(\theta)$ is useful (\cite{KumeWood2005}).
\SEI{
\begin{align}
C(\theta)
 = \frac{C(0)}{2\pi i}\int_{t_0-i\infty}^{t_0+i\infty}
 \frac{1}{\prod_{k=1}^p\sqrt{-\theta_k-t}}e^{-t} dt,
 \label{eq:Bingham-1dim}
\end{align}
}where $t_0$ is any real number less than $-\theta_k$ for all $k$. \SEI{Recall that $C(0)=\int_{S^{p-1}}dx=2\pi^{p/2}/\Gamma(p/2)$.}
If $p=2$, the integral in (\ref{eq:Bingham-1dim}) should be interpreted appropriately because it is not integrable in the Lebesgue sense (see Appendix). \AK{Note however that for the $p=2$ case the normalizing constant is related to that of the von Mises-Fisher distribution involving the Bessel function of the first kind}.
\AK{While in general there is not a closed form for $C(\theta)$,  its calculation  is essential in likelihood estimation of the parameters of Bingham distributions.} The saddlepoint approximation is shown to work very well in a range of parameter values (see \cite{KumeWood2005}). In this paper however, we will exploit the connection between partial derivatives of $C(\theta)$ to implement the theory of differential equations. \AK{The basic idea here is that provided that we have a well defined curve in the parameter space $\theta$ whose value $C(\theta_0)$ at initial point $\theta_0$ is accurately known then the numerical methods of the differential theory will provide accurate solutions for the end point of the curve. In principle, if we can then provide the starting point accurately we will get the end point after numerical routines implementation. }
This approach has started to be implemented for similar distributions in 
\cite{hgd}. In this paper we will explicitly adopt the theory for the Bingham distribution by constructing the relevant Pfaffian equation and using it for deriving numerically the exact solution of $C(\theta)$.

The paper is organised as follows. \AK{In section 2 we provide a quick review of the Holonomic gradient methods. In \AKL{s}ection 3, we provide the necessary calculations for implementing this particular gradient method to Bingham distributions including the degenerate cases of multiplicities in the parameters. In section 4 we provide the numerical evidence of the method proposed. We then compare it with the saddle point approximation and other cases when we know the normalizing constant expression exactly. We conclude the paper with some discussion.}

\section{Review of the holonomic gradient methods}

In this section we review the framework of the holonomic gradient methods.
See \cite{hgd}, \cite{HNTT}, \cite{SO3}, \cite{Koyama2011}, \cite{yama4} and \cite{yama4b}
for details and further information.

We consider not only the Bingham distribution
but also a general parametric family
$f(x|\theta)$ on the sample space $\mathcal{X}$ with the parameter $\theta=(\theta_1,\ldots,\theta_d)^{\top}$.
The parameter space $\Theta$ is an open subset of the $d$-dimensional Euclidean space.

We assume that the density function $f(x|\theta)$
is an elementary function of $\theta$ and
\AKL{a} $r$-dimensional vector $G=G(\theta)$ satisfying the following PDE:
\begin{align}
 \partial_i G(\theta)
 = P_i(\theta)G(\theta),\quad i=1,\ldots,d,
 \label{eq:Pfaffian}
\end{align}
where 
$\partial_i$ denotes $\partial/\partial\theta_i$ and
$P_i(\theta)$ is a $r\times r$-matrix of rational functions of $\theta$.
The equation (\ref{eq:Pfaffian}) is called
{\em the Pfaffian equation} of $G$
and plays an essential role in this paper.
Typically the vector $G$ consists of the normalising constant of $f(x|\theta)$
and its derivatives.
We give an example.

\begin{example}
 Consider the von Mises--Fisher distribution
 \begin{align*}
  f(x|\theta,\mu)
  = \frac{1}{C(\theta)}e^{\theta\mu^{\top}x},
  \quad
  C(\theta)
  = \int_{S^{p-1}} e^{\theta\mu^{\top}x}dx,
 \end{align*}
 on the unit sphere $S^{p-1}$,
 where $\theta\geq 0$, $\mu\in S^{p-1}$,
 and $dx$ is the uniform \SEI{measure}.
 It is known that \SEI{$C(\theta)=(2\pi)^{p/2}I_{p/2-1}(\theta)/\theta^{p/2-1}$} 
 and $I_{\nu}$ denotes the modified Bessel function of the first kind and order $\nu$
 (see p.~168 of \cite{MardiaJupp2000}).
 The function $C$ satisfies the following ordinary differential equation:
 \begin{align*}
  C''(\theta) + \frac{p-1}{\theta}C'(\theta) - C(\theta) &= 0.
 \end{align*}
 Putting $\theta=(\theta_1)$ and $G=(C,C')^{\top}$,
 we have the Pfaffian equation:
 \begin{align*}
  &
  \partial_1
  \begin{pmatrix}G_1\\ G_2\end{pmatrix}
  = \begin{pmatrix}
     0& 1\\
     1& -\frac{p-1}{\theta}
    \end{pmatrix}
  \begin{pmatrix}G_1\\ G_2\end{pmatrix}.
 \end{align*}
 The density function is written as an elementary function
 $f(x|\theta,\mu) = e^{\theta\mu^{\top}x}/G_1(\theta)$
 of $\theta$, $\mu$ and $G$.
 Note that the modified Bessel function itself
 is not an elementary function.
 \qed
\end{example}

\subsection{The HG algorithm}

Assume that a numerical value of the vector $G(\theta^{(0)})$
at some point $\theta^{(0)}\in\Theta$ is given.
The {\em holonomic gradient (HG) algorithm} evaluates $G(\theta^{(1)})$ at
any other point $\theta^{(1)}$.
The algorithm is based on the following lemma.

\begin{lemma}
 Let $\bar\theta(\tau)$, $\tau\in[0,1]$,
 be a smooth curve in $\Theta$ such that
 $\bar\theta(0)=\theta^{(0)}$ and $\bar\theta(1)=\theta^{(1)}$.
 Put $\bar{G}(\tau)=G(\bar\theta(\tau))$.
 Then $\bar{G}(\tau)$ is the solution of
 the ordinary differential equation (ODE)
 \begin{align}
  \frac{d}{d\tau}\bar{G}(\tau)
  &= \sum_{i=1}^d \frac{d\bar\theta_i(\tau)}{d\tau}
  P_i(\bar\theta(\tau))\bar{G}(\tau),
  \label{eq:G-update}
 \end{align}
 with the initial condition $\bar{G}(0)=G(\theta^{(0)})$.
 In particular, $\bar{G}(1)=G(\theta^{(1)})$.
\end{lemma}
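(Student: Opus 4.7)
The plan is to recognize this as an essentially immediate consequence of the chain rule applied to the composition $\bar G(\tau) = G(\bar\theta(\tau))$, combined with the Pfaffian equation (\ref{eq:Pfaffian}) that $G$ satisfies by hypothesis. There is no real obstacle; the only thing to be careful about is that the curve $\bar\theta(\tau)$ must remain in $\Theta$ for every $\tau\in[0,1]$ so that $P_i(\bar\theta(\tau))$ is well-defined (the $P_i$ are rational, hence defined away from their pole loci), but this is implicit in the statement that $\bar\theta$ is a smooth curve in $\Theta$.

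First I would differentiate $\bar G(\tau) = G(\bar\theta(\tau))$ using the chain rule to obtain
\begin{align*}
 \frac{d}{d\tau}\bar G(\tau)
 = \sum_{i=1}^d \frac{d\bar\theta_i(\tau)}{d\tau}\,(\partial_i G)(\bar\theta(\tau)).
\end{align*}
Then I would substitute the Pfaffian identity $\partial_i G(\theta) = P_i(\theta) G(\theta)$ evaluated at $\theta = \bar\theta(\tau)$, which gives $(\partial_i G)(\bar\theta(\tau)) = P_i(\bar\theta(\tau))\,G(\bar\theta(\tau)) = P_i(\bar\theta(\tau))\,\bar G(\tau)$. Plugging this into the chain-rule expression yields exactly (\ref{eq:G-update}).

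The initial condition is immediate from the definition: $\bar G(0) = G(\bar\theta(0)) = G(\theta^{(0)})$. For the final claim, note that the ODE (\ref{eq:G-update}) is linear in $\bar G$ with continuous coefficients along the curve (continuity of $P_i(\bar\theta(\tau))$ follows from smoothness of $\bar\theta$ and the fact that $\bar\theta(\tau)\in\Theta$ keeps us away from the poles of the rational matrices $P_i$). Hence the solution with the given initial condition exists and is unique on $[0,1]$, so the function $\tau\mapsto G(\bar\theta(\tau))$ — which we have just shown satisfies both the ODE and the initial condition — must coincide with $\bar G$ on $[0,1]$. Evaluating at $\tau=1$ gives $\bar G(1) = G(\bar\theta(1)) = G(\theta^{(1)})$, completing the proof.
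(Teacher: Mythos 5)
Your proof is correct and follows essentially the same route as the paper's: apply the chain rule to $G(\bar\theta(\tau))$ and substitute the Pfaffian equation (\ref{eq:Pfaffian}). Your added remarks on uniqueness of the linear ODE solution and on $\bar\theta(\tau)$ staying in $\Theta$ are sensible elaborations the paper leaves implicit, but they do not change the argument.
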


\begin{proof}
 By differentiation of composite functions, we have
 \begin{align*}
  \frac{dG(\bar{\theta}(\tau))}{d\tau}
  &= \sum_{i=1}^d \frac{d\bar{\theta}_i}{d\tau}\partial_iG(\bar{\theta}(\tau))
  \\
  &= \sum_{i=1}^d \frac{d\bar{\theta}_i}{d\tau}P_i(\bar{\theta}(\tau))G(\bar{\theta}(\tau)),
 \end{align*}
 where the last equality uses the Pfaffian equation (\ref{eq:Pfaffian}).
 This proves (\ref{eq:G-update}).
 The initial condition is obvious.
\end{proof}

The HG algorithm is described as follows.
A natural choice of $\bar{\theta}$
is the segment $\bar\theta(\tau)=(1-\tau)\theta^{(0)}+\tau\theta^{(1)}$
connecting $\theta^{(0)}$ and $\theta^{(1)}$.

\begin{description}
 \item[Input] $\theta^{(0)}$, $G(\theta^{(0)})$ and $\theta^{(1)}$.
 \item[Output] $G(\theta^{(1)})$.
 \item[Algorithm]
\end{description}
\begin{enumerate}
 \item Numerically solve the ODE (\ref{eq:G-update}) over $\tau\in[0,1]$.
 \item Return $\bar{G}(1)$.
\end{enumerate}
\AK{Note that the standard numerical routines for solving \eqref{eq:G-update} are highly accurate and available in most computer packages.}

\subsection{The discrete-time HGD algorithm}
\AK{In the following, we will implement the HG algorithm for maximum likelihood estimation of the parameters for the parametric family $f(x|\theta)$  (including that of Bingham).
Let $x(1),\ldots,x(N)$ be some observed data and we want to perform MLE based on the parametric family.}
The log-likelihood function $\ell(\theta)=\sum_{t=1}^N\log f(x(t)|\theta)$ is
written as $\ell(\theta)=L(\theta,G(\theta))$,
where $L:\Theta\times\mathbb{R}^r\to\mathbb{R}$ is an elementary function and $G$ \AK{satisfying} the Pfaffian equation (\ref{eq:Pfaffian}).
We consider the (naive) Newton-Raphson method:
\begin{align}
 \theta^{(k+1)}
 &= \theta^{(k)} - [\mathop{\rm Hess}\ell(\theta^{(k)})]^{-1}\mathop{\rm grad}\ell(\theta^{(k)}).
 \label{eq:NR}
\end{align}
where
\[
\mathop{\rm grad}\ell(\theta)=(\partial_i\ell(\theta))_{i=1}^d
\]
and
\[
\mathop{\rm Hess}\ell(\theta)=(\partial_i\partial_j\ell(\theta))_{i,j=1}^d.
\]
It is expected that the solution $\theta^{(k)}$ converges to
the MLE $\hat\theta$ as $k\to\infty$.

The {\em holonomic gradient descent (HGD) algorithm} numerically updates
$G(\theta^{(k)})$ by the HG algorithm.
The gradient vector and Hessian matrix of $\ell(\theta)$ at $\theta^{(k)}$
are computed only in terms of $\theta^{(k)}$ and $G(\theta^{(k)})$.
\AK{In fact $G$ is a vector of the normalizing constant and its derivatives which are closely related to derivatives in $\eqref{eq:NR}$.
More explicitly, the following lemma holds.}

\begin{lemma} \label{lemma:d_loglike}
 For any $i$ and $j$ in $\{1,\ldots,d\}$, we have
 \begin{align*}
  \partial_i \ell(\theta)
  &= \left.\frac{\partial L}{\partial\theta_i}
  + \sum_{a=1}^r(P_iG)_a\frac{\partial L}{\partial G_a}\right|_{G=G(\theta)},
  \\
  \partial_i\partial_j\ell(\theta)
  &= \frac{\partial^2L}{\partial\theta_i\partial\theta_j}
  + \sum_{a=1}^r(P_iG)_a\frac{\partial^2L}{\partial\theta_j\partial G_a}
  \\
  &\quad
  + \sum_{a=1}^r(P_jG)_a\frac{\partial^2L}{\partial\theta_i\partial G_a}
  \\
  &\quad 
  + \sum_{a=1}^r((\partial_jP_i)G+P_iP_jG)_a\frac{\partial L}{\partial G_a}
  \\
  &\quad
  + \left.\sum_{a,b=1}^r(P_iG)_a(P_jG)_b\frac{\partial^2L}{\partial G_a\partial G_b}\right|_{G=G(\theta)},
 \end{align*}
 where $(P_iG)_a$ is the $a$-th component of the $r$-vector $P_iG$ and so on.
\end{lemma}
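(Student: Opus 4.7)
The plan is to write $\ell(\theta)=L(\theta,G(\theta))$, apply the multivariate chain rule, and then use the Pfaffian equation (\ref{eq:Pfaffian}) to replace every occurrence of $\partial_i G$ by $P_iG$. Throughout, $L$ is regarded as an elementary function of two independent slots $(\theta,G)\in\Theta\times\mathbb{R}^r$; only at the end is the second slot specialised to the curve $G(\theta)$.

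For the first-derivative formula I would apply the chain rule once to the composition, producing a direct $\partial L/\partial\theta_i$ term plus a sum $\sum_a (\partial L/\partial G_a)\,\partial_iG_a$; the Pfaffian substitution $\partial_iG_a=(P_iG)_a$ immediately yields the stated expression.

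For the Hessian I would differentiate the first-derivative formula with respect to $\theta_j$, splitting the work into two pieces. The first piece, from $\partial L/\partial\theta_i$ evaluated at $G=G(\theta)$, is itself a composition and, after one more chain-rule pass and Pfaffian substitution, contributes the $\partial^2L/\partial\theta_j\partial\theta_i$ and $\sum_a(P_jG)_a\,\partial^2L/\partial\theta_i\partial G_a$ terms. The second piece, the sum $\sum_a (P_iG)_a\,(\partial L/\partial G_a)$ evaluated at $G=G(\theta)$, is a product; combining the product rule with the identity $\partial_j(P_iG)=(\partial_jP_i)G+P_iP_jG$ and another chain-rule pass through $\partial L/\partial G_a$ yields the remaining contributions involving $((\partial_jP_i)G+P_iP_jG)_a\,\partial L/\partial G_a$, $(P_iG)_a\,\partial^2L/\partial\theta_j\partial G_a$, and the quadratic $(P_iG)_a(P_jG)_b\,\partial^2L/\partial G_a\partial G_b$. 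Collecting and relabelling reproduces the lemma.

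No genuine obstacle arises; the proof is purely mechanical chain rule plus Pfaffian substitution. The one point where care is needed, and the only plausible source of a missing term, is to remember that any function of $(\theta,G)$ differentiated in $\theta_j$ picks up \emph{two} contributions—a direct one and an indirect one through $G(\theta)$—so each of $\partial L/\partial\theta_i$ and $\partial L/\partial G_a$ must be expanded as a composition before the Pfaffian is invoked.
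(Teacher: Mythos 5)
Your proposal is correct and follows exactly the route the paper takes: its proof is the one-line remark that the formulas follow from differentiation of composite functions together with the Pfaffian equation, and your expansion (chain rule, product rule, the identity $\partial_j(P_iG)=(\partial_jP_i)G+P_iP_jG$, and the final Pfaffian substitution) is just that argument written out in full. No discrepancy to report.
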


\begin{proof}
 The formulas are obtained by
 differentiation of composite functions
 and the Pfaffian equation (\ref{eq:Pfaffian}).
\end{proof}

The holonomic gradient descent algorithm is described as follows.
We refer to this algorithm as {\em the discrete-time HGD algorithm}
in order to distinguish \AKL{it from} the continuous-time HGD algorithm
defined in the following section.

\begin{description}
 \item[Input] The initial point \SEI{$\theta^{(0)}\in\mathbb{R}^d$
	    and $G(\theta^{(0)})\in\mathbb{R}^r$.}
 \item[Output] The MLE $\hat\theta$ and $G(\hat\theta)$.
 \item[Algorithm] 
\end{description}
\begin{enumerate}
 \item Let $k=0$.
 \item Compute $\theta^{(k+1)}$ by the formula
       (\ref{eq:NR}) via Lemma~\ref{lemma:d_loglike}.
 \item Compute $G(\theta^{(k+1)})$ by the HG algorithm,
       i.e., numerically solve the ODE (\ref{eq:G-update})
       with $\bar\theta(\tau)=(1-\tau)\theta^{(k)}+\tau\theta^{(k+1)}$.
 \item If the norm $\|\mathop{\rm grad}\ell(\theta)\|$
       is sufficiently small,
       then return $\theta^{(k+1)}$ as output.
       Otherwise, put $k\leftarrow k+1$ and go to Step 2.
\end{enumerate}

\subsection{The continuous-time HGD algorithm}

We can also consider a continous-time version of the Newton-Raphson scheme.
Let $\ell(\theta)$ be the log-likelihood function as in the preceding section.
Consider the solution $\bar\theta(\tau)$ of the following differential equation:
\begin{align}
 \frac{d\bar\theta}{d\tau}
 &= -\frac{1}{1-\tau}[\mathop{\rm Hess}\ell(\bar\theta)]^{-1}\mathop{\rm grad}\ell(\bar\theta),
 \ \ 0\leq \tau<1,
 \label{eq:NR-cont}
\end{align}
where the gradient vector and Hessian matrix of $\ell(\theta)$
are calculated \AK{as in} Lemma~\ref{lemma:d_loglike}.
The solution $\bar\theta(\tau)$ of (\ref{eq:NR-cont}) converges to
the (local) maximum likelihood estimate $\hat\theta$ as $\tau\to 1$
because
\begin{align*}
 \frac{d}{d\tau}[\mathop{\rm grad}\ell(\bar\theta(\tau))]
 &= \mathop{\rm Hess}\ell(\bar\theta(\tau))\frac{d\bar\theta}{d\tau}
 = \frac{-1}{1-\tau}\mathop{\rm grad}\ell(\bar\theta(\tau))
\end{align*}
by (\ref{eq:NR-cont}).
This means
\begin{align}
 \mathop{\rm grad}\ell(\bar\theta(\tau))
 =(1-\tau)\mathop{\rm grad}\ell(\bar\theta(0)),
 \label{eq:NR-cont-grad}
\end{align}
which converges to $0$ as $\tau\to 1$.
The update rule for $\bar{G}(\tau):=G(\bar\theta(\tau))$
is the same as the equation (\ref{eq:G-update}).

The continuous-time version of the holonomic gradient descent is
described as follows.

\begin{description}
 \item[Input] The initial point \SEI{$\theta^{(0)}\in\mathbb{R}^d$, $G(\theta^{(0)})\in\mathbb{R}^r$}
	    and a sufficiently small number $\varepsilon>0$.
 \item[Output] The MLE $\hat\theta$ and $G(\hat\theta)$.
 \item[Algorithm]
\end{description}
\begin{itemize}
 \item[1] Solve the ordinary differential equation
	  (\ref{eq:G-update}) and (\ref{eq:NR-cont})
	  from $\tau=0$ to $\tau=1-\varepsilon$, simultaneously,
	  with the initial value $\bar\theta(0)=\theta^{(0)}$
	  and $\bar{G}(0)=G(\theta^{(0)})$.
 \item[2] Return $\bar\theta(1-\varepsilon)$ as $\hat{\theta}$.
\end{itemize}

\begin{remark}
 The continuous-time version is eaisier to implement
 than the discrete-time one.
 A drawback of the continuous version is that the point $\tau=1$
 of the equation (\ref{eq:NR-cont}) is singular.
 Therefore the number $\varepsilon$ in the HGD algorithm cannot be very small.
 Instead, the approximate value $\bar\theta(1-\varepsilon)$
 will be improved by one or several steps of the discrete-time scheme
 (\ref{eq:NR}).
\end{remark}

\section{Pfaffian equation for the Bingham distribution}

In this section, 
we consider the Bingham distribution (\ref{eq:Bingham}) again.
The Pfaffian equation for the Bingham distribution
is \AK{initially} derived for a generic case
and \AK{then we focus on the singular cases which occur when parameter values of $\theta$ have multiplicities.
For example, the complex Bingham distribution is a singular \AKL{case}.}

\subsection{Generic case}

We first derive partial differential equations satisfied
by the normalising constant of the Bingham distribution
on the set $\{\theta\mid \theta_i\neq \theta_j\ \mbox{for}\ i\neq j\}$.
Denote the partial derivative $\partial/\partial\theta_i$ by $\partial_i$.

\begin{lemma}
The normalising constant of the Bingham distribution satisfies
the following set of partial differential equations:
\begin{align}
 &\sum_{i=1}^p\partial_i C = C,
 \label{eq:PDE-1}
 \\
 &\partial_i\partial_jC = \frac{\partial_iC - \partial_jC}{2(\theta_i-\theta_j)},
 \quad 1\leq i<j\leq p.
 \label{eq:PDE-2}
\end{align}
\end{lemma}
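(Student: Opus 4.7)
The plan is to differentiate under the integral sign and exploit two structural facts about the unit sphere: the defining identity $\sum_i x_i^2 = 1$, and rotational symmetry in coordinate planes. Since $C(\theta)$ is an integral of a smooth function depending smoothly on $\theta$, dominated convergence justifies
\[
 \partial_i C(\theta) = \int_{S^{p-1}} x_i^2 \, e^{\sum_k \theta_k x_k^2} \, dx,
 \qquad
 \partial_i\partial_j C(\theta) = \int_{S^{p-1}} x_i^2 x_j^2 \, e^{\sum_k \theta_k x_k^2} \, dx.
\]
Equation (\ref{eq:PDE-1}) is then immediate: summing the first identity over $i$ and pulling $\sum_i x_i^2 = 1$ out of the integral gives $\sum_i \partial_i C = \int_{S^{p-1}} e^{\sum_k \theta_k x_k^2}\,dx = C$.

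For (\ref{eq:PDE-2}) I would use integration by parts on $S^{p-1}$ with respect to the Killing vector field
\[
 V = x_i \frac{\partial}{\partial x_j} - x_j \frac{\partial}{\partial x_i},
\]
which is tangent to the sphere (since $V \cdot x = 0$) and generates the $\mathrm{SO}(2)$ rotation in the $(x_i,x_j)$-plane. Because this flow preserves both $S^{p-1}$ and the uniform measure $dx$, $V$ is divergence-free on the sphere, so for any smooth $h$,
\[
 \int_{S^{p-1}} V(h)\,dx = 0.
\]
I would apply this to $h = x_i x_j \, e^{\sum_k \theta_k x_k^2}$. Using $V(x_i) = -x_j$, $V(x_j) = x_i$ and $V(x_k) = 0$ for $k \ne i,j$, one computes $V(x_i x_j) = x_i^2 - x_j^2$ and $V\!\left(\sum_k \theta_k x_k^2\right) = 2(\theta_j - \theta_i) x_i x_j$, so by the product rule
\[
 V(h) = \bigl\{(x_i^2 - x_j^2) + 2(\theta_j-\theta_i) x_i^2 x_j^2\bigr\}\, e^{\sum_k \theta_k x_k^2}.
\]
Integrating over $S^{p-1}$, the left side vanishes while the right side is exactly $\partial_i C - \partial_j C - 2(\theta_i - \theta_j)\,\partial_i\partial_j C$, which rearranges to (\ref{eq:PDE-2}).

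The only nontrivial step is the identity $\int_{S^{p-1}} V(h)\,dx = 0$; I would justify it either abstractly (invariance of $dx$ under the one-parameter group of rotations generated by $V$, whose derivative at $\tau = 0$ is $V$) or by a direct computation in spherical coordinates in the $(x_i,x_j)$-plane, where $V$ becomes $\partial/\partial \phi$ and the boundary terms cancel by $2\pi$-periodicity. Everything else is an elementary differentiation and rearrangement.
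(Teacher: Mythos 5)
Your argument is correct, and for equation (\ref{eq:PDE-1}) it is identical to the paper's (differentiate under the integral and use $\sum_i x_i^2=1$ on the sphere). For equation (\ref{eq:PDE-2}), however, you take a genuinely different route. The paper works from the one-dimensional Laplace-inversion representation (\ref{eq:Bingham-1dim}): it differentiates that contour integral twice, producing the factor $\frac{1}{4(-\theta_i-t)(-\theta_j-t)}$ in the integrand, and then applies a partial-fraction decomposition to split it into the two first-derivative integrands, which yields (\ref{eq:PDE-2}) directly. You instead stay on the sphere and integrate by parts against the rotational vector field $V=x_i\partial_{x_j}-x_j\partial_{x_i}$; your computations of $V(x_ix_j)$ and $V(\sum_k\theta_kx_k^2)$ are right, the identity $\int_{S^{p-1}}V(h)\,dx=0$ is correctly justified by invariance of the uniform measure under the rotation flow generated by $V$, and the resulting relation rearranges exactly to (\ref{eq:PDE-2}). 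Your approach has the advantage of being intrinsic and uniform in $p$: the paper must assume $p\geq 3$ "for simplicity" and handle $p=2$ separately because the representation (\ref{eq:Bingham-1dim}) is not Lebesgue-integrable there, whereas your argument needs no such caveat and does not presuppose the contour-integral formula at all. The paper's route, on the other hand, reuses the same machinery (the one-dimensional representation and its derivatives) that it later needs for the degenerate case with multiplicities in Theorem~\ref{theorem:Pfaffian-degenerate}, so it keeps the exposition unified; your method would also extend to that case, but would require redoing the vector-field computation with the multiplicity structure.
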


Equation (\ref{eq:PDE-2}) is known as
the Euler-Darboux (or Euler-Poisson-Darboux) equation \SEI{(e.g.\ \cite{Takayama1992})}.

\begin{proof}
 Equation (\ref{eq:PDE-1}) immediately follows from the definition (\ref{eq:Bingham}) of $C(\theta)$.
 Indeed, since the support of the measure $dx$ is $S^{p-1}$, we have
 \begin{align*}
 \sum_{i=1}^p\partial_iC
 &= \int_{S^{p-1}} \left(\sum_{i=1}^p x_i^2\right)e^{\sum_{i=1}^p\theta_i x_i^2} dx
  \\
 & = \int_{S^{p-1}} e^{\sum_{i=1}^p\theta_i x_i^2} dx
 \\
 & = C.
 \end{align*}
 Equation (\ref{eq:PDE-2}) is proved as follows.
 Assume $p\geq 3$ for simplicity.
\AK{Based on} (\ref{eq:Bingham-1dim}) and Lebesgue's convergence theorem,
 the first derivative of $C$ is
 \SEI{
 \begin{align*}
  \partial_iC
  &= \frac{C(0)}{2\pi i}\int_{t_0-i\infty}^{t_0+i\infty}
  \frac{1}{2(-\theta_i-t)\prod_{k=1}^p\sqrt{-\theta_k-t}}e^{-t}dt.
 \end{align*}
}For $i\neq j$, the second derivative is
\SEI{
 \begin{align*}
  \partial_i\partial_jC
  &= \frac{C(0)}{2\pi i}\int_{t_0-i\infty}^{t_0+i\infty}
  \\
  &\frac{1}{4(-\theta_i-t)(-\theta_j-t)\prod_{k=1}^p\sqrt{-\theta_k-t}}e^{-t}dt.
 \end{align*}
}The partial fractional decomposition yields
 \begin{align*}
  &\frac{1}{4(-\theta_i-t)(-\theta_j-t)}
  \\
  &= \frac{1}{2(\theta_i-\theta_j)}
  \left(\frac{1}{2(-\theta_i-t)}-\frac{1}{2(-\theta_j-t)}\right).
 \end{align*}
 Then we obtain (\ref{eq:PDE-2}).
 For $p=2$, the same equation is derived.
\end{proof}

We give the Pfaffian equation of $C(\theta)$.
Let
\begin{align*}
 G=G(\theta)=(\partial_1C,\ldots,\partial_pC).
\end{align*}
Note that $C$ is written as
$C=\sum_{i=1}^pG_i$ due to (\ref{eq:PDE-1}).

\begin{theorem} \label{theorem:Pfaffian-Bingham}
 The vector $G$ satisfies the following PDEs.
 For $i$ and $j$ in $\{1,\ldots,p\}$,
 \begin{align}
  \partial_i G_j = \frac{G_i-G_j}{2(\theta_i-\theta_j)},
  \quad i\neq j,
  \label{eq:Pfaffian-1}
 \end{align}
 and
 \begin{align}
  \partial_i G_i
  = G_i - \sum_{k\neq i}^p\frac{G_i-G_k}{2(\theta_i-\theta_k)},
  \label{eq:Pfaffian-2}
 \end{align}
 where $\sum_{k\neq i}^p$ denotes summation over
 $k\in\{1,\ldots,p\}\setminus\{i\}$.
\end{theorem}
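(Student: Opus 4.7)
The plan is to derive both Pfaffian relations as direct consequences of the two PDEs established in the preceding lemma. Since $G_j = \partial_j C$, we have $\partial_i G_j = \partial_i \partial_j C$, so the task reduces to computing the Hessian entries of $C$ in terms of its gradient.

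For the off-diagonal case $i \neq j$, I would simply invoke equation (\ref{eq:PDE-2}). That equation is stated for $i < j$, but since mixed partials commute the same identity holds for all $i \neq j$, giving
\[
\partial_i G_j = \partial_i \partial_j C = \frac{\partial_i C - \partial_j C}{2(\theta_i - \theta_j)} = \frac{G_i - G_j}{2(\theta_i - \theta_j)},
\]
which is (\ref{eq:Pfaffian-1}).

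For the diagonal case, the idea is to recover $\partial_i^2 C$ from the ``Euler-type'' identity (\ref{eq:PDE-1}). Differentiating $\sum_{k=1}^p \partial_k C = C$ with respect to $\theta_i$ yields $\sum_{k=1}^p \partial_i \partial_k C = \partial_i C$, so
\[
\partial_i G_i = \partial_i^2 C = \partial_i C - \sum_{k \neq i} \partial_i \partial_k C.
\]
Substituting the off-diagonal formula just established into the right-hand side produces
\[
\partial_i G_i = G_i - \sum_{k \neq i} \frac{G_i - G_k}{2(\theta_i - \theta_k)},
\]
which is (\ref{eq:Pfaffian-2}).

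Since both computations are essentially algebraic rearrangements of the previous lemma, there is no genuine obstacle. The only minor subtlety worth flagging is that (\ref{eq:PDE-2}) is written only for $i < j$, and one needs to note that Clairaut's theorem extends it to all $i \neq j$ before using it to compute $\partial_i G_j$. The whole derivation is valid on the generic set $\{\theta \mid \theta_i \neq \theta_j \text{ for } i \neq j\}$, where the denominators $2(\theta_i - \theta_j)$ do not vanish; the singular cases are deferred to the subsequent subsection on multiplicities.
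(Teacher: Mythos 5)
Your proposal is correct and follows essentially the same route as the paper: equation (\ref{eq:Pfaffian-1}) is read off from (\ref{eq:PDE-2}) via the definition of $G$, and (\ref{eq:Pfaffian-2}) is obtained by differentiating the identity (\ref{eq:PDE-1}) and substituting the off-diagonal formula. Your remark about extending (\ref{eq:PDE-2}) from $i<j$ to all $i\neq j$ by symmetry of mixed partials is a minor point the paper leaves implicit.
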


\begin{proof}
 The equation (\ref{eq:Pfaffian-1}) follows from the definition of $G$
 and the equation (\ref{eq:PDE-2}).
 We prove (\ref{eq:Pfaffian-2}).
 By (\ref{eq:PDE-1}), $\partial_iG_i$ is written as
 \begin{align*}
  \partial_i G_i
  &= \partial_i^2 C
  \\
  &= \partial_i
  \left(C-\sum_{k\neq i}\partial_kC\right)
  \\
  &= G_i - \sum_{k\neq i}\partial_iG_k.
 \end{align*}
 Since $\partial_iG_k$ is given by (\ref{eq:Pfaffian-1}),
 we have (\ref{eq:Pfaffian-2}).
\end{proof}

For the Bingham distribution, the parameter $\theta$
has redundancy because the following identity holds for any real number $c$:
\begin{align}
 f(x|\theta) = f(x|\theta-c1_p),
 \label{eq:redundancy}
\end{align}
where $1_p$ is the $p$-vector of ones.
There are two methods to remove the redundancy.
One of them is to restrict \AKL{one of} the parameter \AKL{say} $\theta_p=0$.
Then the derivative $\partial_p$ has to be removed
from the Pfaffian equation in a proper way.
The result is the following corollary.

\begin{corollary} \label{corollary:Pfaffian-Bingham-tilde}
 The vector
 \[
 \tilde{G}=(\tilde{G}_1,\ldots,\tilde{G}_p)=(C,\partial_1C,\ldots,\partial_{p-1}C)|_{\theta_p=0}
 \]
 satisfies the following equations.
 For $i$ and $j$ in $\{1,\ldots,p-1\}$,
 \begin{align}
  \partial_i \tilde{G}_1 = \tilde{G}_{i+1},
  \quad
  \partial_i \tilde{G}_{j+1} = \frac{\tilde{G}_{i+1}-\tilde{G}_{j+1}}{2(\theta_i-\theta_j)},
  \quad i\neq j,
  \label{eq:Pfaffian-3}
 \end{align}
 and
 \begin{align}
  \partial_i \tilde{G}_{i+1}
  &= \tilde{G}_{i+1} - \sum_{k\neq i}^{p-1}\frac{\tilde{G}_{i+1}-\tilde{G}_{k+1}}{2(\theta_i-\theta_k)}
  \nonumber\\
  &- \frac{\tilde{G}_{i+1}-(\tilde{G}_1-\sum_{\ell=1}^{p-1}\tilde{G}_{\ell+1})}{2\theta_i},
  \label{eq:Pfaffian-4}
 \end{align}
 where $\sum_{k\neq i}^{p-1}$ denotes
 the sum over $k\in\{1,\ldots,p-1\}\setminus\{i\}$.
\end{corollary}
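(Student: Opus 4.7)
The plan is a direct computation that reads off the equations from Theorem~\ref{theorem:Pfaffian-Bingham} and PDE~(\ref{eq:PDE-1}), together with one small reduction that expresses the ``missing'' derivative $\partial_p C|_{\theta_p = 0}$ in terms of the components of $\tilde{G}$.

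First I would dispose of (\ref{eq:Pfaffian-3}). The identity $\partial_i \tilde{G}_1 = \tilde{G}_{i+1}$ is just the definition of $\tilde{G}_{i+1}$, noting that $\partial_i$ (for $i\neq p$) commutes with the restriction $\theta_p \mapsto 0$. For $i\neq j$ with $i,j\in\{1,\ldots,p-1\}$, the identity $\partial_i \tilde{G}_{j+1}=(\tilde{G}_{i+1}-\tilde{G}_{j+1})/(2(\theta_i-\theta_j))$ is exactly (\ref{eq:Pfaffian-1}) restricted to $\theta_p=0$ after re-indexing.

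The actual content is (\ref{eq:Pfaffian-4}). I would start from (\ref{eq:Pfaffian-2}) for the unreduced $G_i=\partial_i C$ and split the sum $\sum_{k\neq i}^{p}$ into the contributions from $k\in\{1,\ldots,p-1\}\setminus\{i\}$ and from $k=p$. Restricting the first piece to $\theta_p=0$ immediately gives the sum $\sum_{k\neq i}^{p-1}(\tilde{G}_{i+1}-\tilde{G}_{k+1})/(2(\theta_i-\theta_k))$ appearing in (\ref{eq:Pfaffian-4}). The $k=p$ term contributes $(G_i-G_p)/(2\theta_i)|_{\theta_p=0}$; to eliminate $G_p=\partial_p C$ I would invoke (\ref{eq:PDE-1}), which yields $\partial_p C = C-\sum_{\ell=1}^{p-1}\partial_\ell C$, and hence after setting $\theta_p=0$, $\partial_p C|_{\theta_p=0}=\tilde{G}_1-\sum_{\ell=1}^{p-1}\tilde{G}_{\ell+1}$. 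Substituting produces exactly the last term of (\ref{eq:Pfaffian-4}).

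The main obstacle is really just bookkeeping: one must check that $\partial_i$ commutes with the restriction $\theta_p\mapsto 0$ (true for $i\neq p$), that the $k=p$ term in the sum of (\ref{eq:Pfaffian-2}) cannot be dropped despite its denominator simplifying to $2\theta_i$, and that the indexing shift between $G_i=\partial_i C$ and $\tilde{G}_{i+1}$ is applied consistently. No new analytic input beyond the two PDEs (\ref{eq:PDE-1}) and (\ref{eq:PDE-2}) is required.
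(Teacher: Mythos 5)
Your proposal is correct and follows essentially the same route as the paper's own proof: the first two identities are read off from the definition of $\tilde{G}$ and equation (\ref{eq:Pfaffian-1}), and (\ref{eq:Pfaffian-4}) is obtained from (\ref{eq:Pfaffian-2}) by isolating the $k=p$ term and substituting $G_p=\tilde{G}_1-\sum_{\ell=1}^{p-1}\tilde{G}_{\ell+1}$, which is exactly the paper's reduction. Your version merely spells out the bookkeeping (the split of the sum, the commutation of $\partial_i$ with the restriction $\theta_p=0$) that the paper leaves implicit.
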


\begin{proof}
 In (\ref{eq:Pfaffian-3}),
 the equation $\partial_i\tilde{G}_1=\tilde{G}_{i+1}$ for $1\leq i\leq p-1$
 is obvious from the definition of $\tilde{G}$.
 The equation for $\partial_i\tilde{G}_{j+1}$
 is the same as (\ref{eq:Pfaffian-1}).
 Finally, (\ref{eq:Pfaffian-4}) is obtained from (\ref{eq:Pfaffian-2})
 if one notes
 $G_p=\tilde{G}_1-\sum_{i=1}^{p-1}\tilde{G}_{i+1}$.
\end{proof}

Another method to remove the redundancy is to impose a penalty factor to the likelihood function
like
\begin{align*}
 \tilde{f}(x|\theta) = f(x|\theta)e^{-\theta_p^2/2}.
\end{align*}
Then the resultant \AK{MLE} should satisfy $\theta_p=0$.
Similarly, if we adopt
\begin{align*}
 \tilde{f}(x|\theta) = f(x|\theta)e^{-(\theta_1+\cdots+\theta_p)^2/2},
\end{align*}
then the resultant \AK{MLE} satisfies $\sum_{i=1}^p\theta_i=0$.
The penalty factor can be considered as the Bayesian prior density for $\theta$.

\subsection{Degenerate case}

We \AK{now} consider \AK{the} degenerate points of equations (\ref{eq:Pfaffian-1}) and (\ref{eq:Pfaffian-2}).
If \AK{multiplicities occur, i.e. some $\theta_i=\theta_j$ for some  $i\neq j$}, coincide,
then the Pfaffian equation can not be defined because the demonimator in (\ref{eq:Pfaffian-1}) becomes zero.
Connections between \AK{such} degenerate cases and those \AK{in the generic case} are essentially higher order derivatives \AK{(see e.g. \citep{KumeWood2007})}. To deal with such degenerate cases, we give the following theorem.

\begin{theorem} \label{theorem:Pfaffian-degenerate}
 Let $\theta=(\phi_1,\ldots,\phi_1,\ldots,\phi_q,\ldots,\phi_q)$,
 where $\{\phi_j\}$ takes distinct values and $\phi_j$ appears $d_j$ times for each $j=1,\ldots,q$.
 Denote the derivative $\partial/\partial\phi_j$ by $\partial_{\phi_j}$.
 Then we have the following equation of
 \[
 G:=(\partial_{\phi_1}C(\theta),\ldots,\partial_{\phi_q}C(\theta)).
 \]
 For $i$ and $j$ in $\{1,\ldots,q\}$,
 \begin{align}
  \partial_{\phi_i}G_j &= \frac{d_jG_i-d_iG_j}{2(\phi_i-\phi_j)},
  \quad i\neq j,
  \label{eq:Pfaffian-degenerate-1}
  \\
  \partial_{\phi_i}G_i &= G_i - \sum_{k\neq i}^q \frac{d_kG_i-d_iG_k}{2(\phi_i-\phi_k)}.
  \label{eq:Pfaffian-degenerate-2}
 \end{align}
 In particular, if $q=p$ and $d_1=\cdots=d_p=1$,
 then the equations coincide with (\ref{eq:Pfaffian-1}) and (\ref{eq:Pfaffian-2}).
 Furthermore, $\tilde{G}=(\tilde{G}_1,\ldots,\tilde{G}_p)
 =(C,\partial_{\phi_1},\ldots,\partial_{\phi_{p-1}}C)|_{\phi_p=0}$
 satisfies the following equations.
 For $i$ and $j$ in $\{1,\ldots,p-1\}$,
 \begin{align}
 &\partial_{\phi_i}\tilde{G}_1
 = \tilde{G}_{i+1},
 \quad \partial_{\phi_i}\tilde{G}_{j+1}
 = \frac{d_i\tilde{G}_j - d_j\tilde{G}_i}{2(\phi_i-\phi_j)},
 \quad i\neq j,
 \label{eq:Pfaffian-degenerate-3}
 \\
 &\partial_{\phi_i}\tilde{G}_{i+1}
 = \tilde{G}_{i+1} - 
 \sum_{k\neq i}^{p-1}
 \frac{d_i\tilde{G}_{k+1} - d_k\tilde{G}_{i+1}}{2(\phi_i-\phi_k)}
 \nonumber\\
 &\quad
 - \frac{d_k\tilde{G}_{i+1} - d_i(\tilde{G}_1 - \sum_{\ell=1}^{p-1}\tilde{G}_{\ell+1})}{2\phi_i}.
  \label{eq:Pfaffian-degenerate-4}
 \end{align}
\end{theorem}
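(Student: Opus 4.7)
The plan is to lift everything back to the generic coordinates $\theta_1,\ldots,\theta_p$ --- where Theorem~\ref{theorem:Pfaffian-Bingham} already furnishes the full Pfaffian system --- and then push the identities forward to the degenerate coordinates by the chain rule. Write $I_j\subset\{1,\ldots,p\}$ for the block of indices on which $\theta_k=\phi_j$, so $|I_j|=d_j$. Because $C$ is a symmetric function of $\theta_1,\ldots,\theta_p$ and smooth in them, two applications of the chain rule give
\[
 G_j = \partial_{\phi_j} C = \sum_{k\in I_j}\partial_{\theta_k}C, \qquad
 \partial_{\phi_i}G_j = \sum_{m\in I_i}\sum_{k\in I_j}\partial_{\theta_m}\partial_{\theta_k}C,
\]
both evaluated at the degenerate point. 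All remaining work is to substitute Theorem~\ref{theorem:Pfaffian-Bingham} into this double sum.

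For $i\neq j$ every pair $(m,k)$ in the double sum has $m\neq k$ and $\theta_m-\theta_k=\phi_i-\phi_j\neq 0$, so (\ref{eq:PDE-2}) applies term by term. Splitting the resulting sum over $m$ and $k$ independently immediately gives $(d_jG_i-d_iG_j)/(2(\phi_i-\phi_j))$, proving (\ref{eq:Pfaffian-degenerate-1}).

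The diagonal case $i=j$ is the main obstacle, because the off-diagonal pairs with $m\neq k$ but $m,k\in I_i$ are mixed partials evaluated exactly where $\theta_m=\theta_k$, so (\ref{eq:PDE-2}) is formally $0/0$ there and cannot be used. The trick is to invoke the other differential relation (\ref{eq:PDE-1}) on each diagonal term, writing $\partial_{\theta_m}^2 C = \partial_{\theta_m}C - \sum_{l\neq m}\partial_{\theta_m}\partial_{\theta_l}C$, summing over $m\in I_i$, and splitting the $l$-sum by whether $l\in I_i$ or not. When this expansion is combined with the off-diagonal contributions $\sum_{m,k\in I_i,\,m\neq k}\partial_{\theta_m}\partial_{\theta_k}C$, the problematic within-block mixed partials cancel exactly, leaving only pairs $(m,l)$ with $m\in I_i$ and $l\notin I_i$. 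These are generic, so (\ref{eq:PDE-2}) now applies, and grouping by the block $I_k$ containing $l$ gives (\ref{eq:Pfaffian-degenerate-2}).

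The remaining statements are formal consequences. Setting $q=p$ and $d_j=1$ collapses every double sum to a single term and reproduces (\ref{eq:Pfaffian-1})--(\ref{eq:Pfaffian-2}). The $\tilde G$ version, with $\phi_p$ fixed at $0$, is derived exactly as in Corollary~\ref{corollary:Pfaffian-Bingham-tilde}: applying the chain rule to (\ref{eq:PDE-1}) yields $\sum_j d_j\partial_{\phi_j}C = C$, which allows one to eliminate $\partial_{\phi_p}C$ in favour of $\tilde G_1-\sum_{\ell=1}^{p-1}\tilde G_{\ell+1}$ inside (\ref{eq:Pfaffian-degenerate-1}) and (\ref{eq:Pfaffian-degenerate-2}), producing (\ref{eq:Pfaffian-degenerate-3})--(\ref{eq:Pfaffian-degenerate-4}).
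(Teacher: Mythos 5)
Your route is genuinely different from the paper's. The paper's proof is a one-line remark: it differentiates the one-dimensional contour representation (\ref{eq:Bingham-1dim}) directly, where at a degenerate point the integrand is $e^{-t}\prod_j(-\phi_j-t)^{-d_j/2}$, so $\partial_{\phi_j}$ brings down a factor $d_j/(2(-\phi_j-t))$ and the same partial-fraction identity used for Theorem~\ref{theorem:Pfaffian-Bingham} yields (\ref{eq:Pfaffian-degenerate-1}) with the weights $d_i,d_j$ appearing automatically. You instead derive the degenerate system as a purely formal consequence of the generic one by restricting to the diagonal stratum via the chain rule. The real obstacle --- that the within-block mixed partials $\partial_{\theta_m}\partial_{\theta_k}C$ with $m\neq k$, $m,k\in I_i$, sit exactly where the Euler--Darboux equation is $0/0$ --- is handled correctly: expanding $\partial_{\theta_m}^2C$ via (\ref{eq:PDE-1}) makes those terms cancel against the off-diagonal within-block part of the double sum, leaving only cross-block pairs where (\ref{eq:PDE-2}) applies. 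This is a nice argument: it shows the degenerate Pfaffian system requires no new analytic input beyond the generic one, only smoothness of $C$ across the diagonal and the observation (worth one explicit sentence) that (\ref{eq:PDE-2}) for a fixed pair $(m,k)$ persists wherever $\theta_m\neq\theta_k$ by analytic continuation, even if other coordinates coincide, since the paper establishes it only on the fully generic set.

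There is one genuine error in your final paragraph: the chain rule applied to (\ref{eq:PDE-1}) gives $\sum_{j=1}^q\partial_{\phi_j}C = C$, \emph{not} $\sum_j d_j\partial_{\phi_j}C = C$; each $\partial_{\phi_j}C=\sum_{k\in I_j}\partial_{\theta_k}C$ already aggregates the $d_j$ coordinates of its block, as your own first display shows. The unweighted identity is what justifies eliminating the last component as $\tilde{G}_1-\sum_{\ell}\tilde{G}_{\ell+1}$ in (\ref{eq:Pfaffian-degenerate-4}); with the weighted identity you wrote, the elimination would instead produce $(\tilde{G}_1-\sum_{\ell}d_{\ell}\tilde{G}_{\ell+1})/d_q$ and the stated equations would not follow. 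The fix is trivial, but as written that step is inconsistent with the rest of your argument.
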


\begin{proof}
 These equations are derived in the same \AK{way} as Theorem~\ref{theorem:Pfaffian-Bingham}
 using the 1-dimensional representation (\ref{eq:Bingham-1dim}) of $C(\theta)$.
\end{proof}

As a corollary of the theorem,
we obtain the Pfaffian equation for {\em the complex Bingham distribution}.
The complex Bingham distribution is
a distribution on the complex sphere
$S_c^{2q-1}=\{x\in\mathbb{C}^q\mid x^{\dagger}x=1\}$,
where $x^{\dagger}$ denotes the complex conjugate transpose of the complex vector $x$.
The complex Bingham density function with respect to the uniform distribution is
\begin{align*}
 f_c(x|\phi)
 &= C_c(\phi)^{-1}e^{\sum_{i=1}^q \phi_i |x_i|^2},
\end{align*}
where $C_c(\phi)$ is the normalising constant $C_c(\phi)$.
It is known that
\SEI{
\begin{align*}
C_c(\phi) &= 2\pi^q\sum_{j=1}^q a_je^{\phi_j}, 
 \quad a_j^{-1} = \prod_{i\neq j}(-\phi_i+\phi_j)
\end{align*}
}and \AK{this is a special case of Bingham distributions where the entries in the $\theta$  parameter are in pairs i.e. $d_1=\cdots=d_q=2$ (see e.g. \AKL{\cite{Ken:94}} \ p.472 of \cite{KumeWood2005}).
We will compare the exact expression above with that of the HG algorithm in the next Section.}


From Theorem~\ref{theorem:Pfaffian-degenerate},
the PDE \AK{varies} depending on how the parameter \AK{vector $\theta$} is degenerated, namely the multiplicities in $\theta$.
However we \AK{show below} that, at least for the maximum likelihood estimation,
\AK{the multiplicities in $\theta$ do not change since they are driven by the sufficient statistics}.
Let $s_i=N^{-1}\sum_{t=1}^Nx_i(t)^2$, $i=1,\ldots,p$, be the sufficient 
statistics with respect to the Bingham density.
Then the log-likelkihood function is given by
\begin{align*}
 \ell(\theta) &= N\left(
 \sum_{i=1}^p\theta_is_i - \log C(\theta)
 \right).
\end{align*}
\SEI{From the theory of exponential families,
there exists the unique MLE up to the redundancy (\ref{eq:redundancy})
if and only if the sufficient statistics satisfy $s_i>0$ for $i=1,\ldots,p$
(see e.g.\ \cite{barndorff-nielsen}, Corollary 9.6).
Note that if $s_i=0$ for some $i$, then all the data $\{x_k(t)\}$
 lies on a common hyperplane $x_i=0$
 and the $i$-th coordinate can be removed from the analysis.
 We assume the condition $s_i>0$ for all $i$ hereafter.
}

We first give a lemma.

\begin{lemma} \label{lemma:order-preserved}
 Assume
 $s = (\sigma_1,\ldots,\sigma_1,\ldots,\sigma_q,\ldots,\sigma_q)$,
 where $\sigma_i$ appears $d_i$ times for each $i=1,\ldots,q$
 and satisfies $\sigma_1<\cdots<\sigma_q$.
 Then the maximum likelihood estimate 
 forms into $\hat\theta=(\hat\phi_1,\ldots,\hat\phi_1,\ldots,\hat\phi_q,\ldots,\hat\phi_q)$,
 where $\hat\phi_1<\cdots<\hat\phi_q$.
\end{lemma}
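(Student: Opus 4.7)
The plan rests on three ingredients: invariance of the normalising constant $C(\theta)$ under coordinate permutations (immediate from (\ref{eq:Bingham}) and the symmetry of the uniform measure on $S^{p-1}$); the rearrangement inequality applied to the linear term $\sum_i \theta_i s_i$ of $\ell(\theta)$; and the exponential-family score identity $s_i = E_{\hat\theta}[x_i^2]$ at the MLE. I will also invoke the uniqueness of the MLE up to the redundancy (\ref{eq:redundancy}) recalled just before the lemma.

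First I would establish the block structure. For indices $k \neq l$ with $s_k = s_l$, let $\hat\theta'$ be obtained from $\hat\theta$ by swapping its $k$-th and $l$-th entries. Permutation-invariance of $C$ together with $\sum_i \hat\theta'_i s_i = \sum_i \hat\theta_i s_i$ yields $\ell(\hat\theta') = \ell(\hat\theta)$, so $\hat\theta'$ is another MLE; uniqueness then gives $\hat\theta' = \hat\theta + c\mathbf{1}_p$ for some $c$, and equating coordinate sums forces $c = 0$, hence $\hat\theta_k = \hat\theta_l$. Applying this within each of the $q$ groups of equal $\sigma$-values produces the form $\hat\theta = (\hat\phi_1,\ldots,\hat\phi_1,\ldots,\hat\phi_q,\ldots,\hat\phi_q)$.

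Next, for the weak inequalities $\hat\phi_1 \leq \cdots \leq \hat\phi_q$, I would use optimality of $\hat\theta$ against its permutations: permutation-invariance of $C$ reduces $\ell(\hat\theta) \geq \ell(\pi \cdot \hat\theta)$ to $\sum_i \hat\theta_i s_i \geq \sum_i \hat\theta_{\pi(i)} s_i$ for every permutation $\pi$, and the rearrangement inequality then forces $\hat\theta$ and $s$ to be similarly ordered componentwise.

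The hard part will be promoting these to strict inequalities, where the moment-matching identity is decisive. If $\hat\phi_i = \hat\phi_j$ for some $i \neq j$, then every coordinate in blocks $i$ and $j$ shares a common $\theta$-value, so the density $f(x|\hat\theta) \propto \exp(\sum_k \hat\theta_k x_k^2)$ is invariant under swapping any $x_k$ in block $i$ with any $x_l$ in block $j$; hence $E_{\hat\theta}[x_k^2] = E_{\hat\theta}[x_l^2]$ for all such $k,l$. Combining this with the moment-matching identity forces $\sigma_i = \sigma_j$, contradicting $\sigma_i < \sigma_j$. The weak monotonicity is thereby promoted to $\hat\phi_1 < \cdots < \hat\phi_q$, completing the plan.
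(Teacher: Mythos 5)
Your proposal is correct and follows essentially the same route as the paper: permutation invariance of $C$ plus a transposition/rearrangement argument and uniqueness of the MLE give the block structure and weak ordering, and the likelihood (moment-matching) equation $s_i=\partial_i\log C(\hat\theta)=E_{\hat\theta}[x_i^2]$ combined with the symmetry of $f(\cdot|\hat\theta)$ under swaps within equal-$\hat\theta$ coordinates upgrades the ordering to strict. Your explicit handling of the redundancy constant ($c=0$ from equal coordinate sums) is a minor elaboration of a step the paper leaves implicit, but the argument is the same.
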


\begin{proof}  
\SEI{
 Let $\hat\theta=(\hat\theta_i)_{i=1}^p$ be the unique MLE.
 We first prove that $\hat\theta_i\leq\hat\theta_j$ for any $i<j$.
 Note that $s_i\leq s_j$ for any $i<j$ by the assumption.
 Assume $\hat\theta_i>\hat\theta_j$ for some $i<j$.
 Then define a permuted vector $\hat\theta^{\pi}$ by $\hat\theta^{\pi}_i=\hat\theta_j$,
 $\hat\theta^{\pi}_j=\hat\theta_i$ and $\hat\theta^{\pi}_k=\hat\theta_k$ for other $k$'s.
 Since $C(\hat\theta^{\pi})=C(\hat\theta)$ by symmetry, we have
 \begin{align*}
  \ell(\hat\theta^{\pi}) - \ell(\hat\theta)
  &= \sum_{k=1}^p \hat\theta_k^{\pi}s_k - \sum_{k=1}^p \hat\theta_ks_k
  \\
  &= (\hat\theta_i-\hat\theta_j)(s_j-s_i)
  \geq 0.
 \end{align*}
 This contradicts to uniqueness of the MLE.
 Hence $\hat\theta_i\leq \hat\theta_j$ for any $i<j$.
 The same argument shows that
 $\hat\theta_i=\hat\theta_j$ for any $i<j$ such that $s_i=s_j$.
 Finally, we prove $\hat\theta_i<\hat\theta_j$ for any $i<j$ such that $s_i<s_j$.
 Assume $\hat\theta_i=\hat\theta_j$ for such $i$ and $j$.
 Then, by using the likelihood equation, we have
 \[
 s_i
 = \frac{\partial}{\partial\theta_i} \log C(\hat\theta)
 = \frac{\partial}{\partial\theta_j} \log C(\hat\theta)
 = s_j,
 \]
 which contradicts to the assumption $s_i<s_j$.
 This completes the proof.
 }
\end{proof}

By the above lemma and symmetry with respect to indices,
the order of $\hat\theta_1,\ldots,\hat\theta_p$
coincides with that of $s_1,\ldots,s_p$ including the number of \AK{multiplicities.}
Furthermore, the following theorem states that
the order is preserved during the HGD algorithm.

\begin{theorem}
 Assume that the order of $\theta^{(0)}$ coincides with that of the sufficient statistics $s$
 (including \AK{multiplicities}).
 Then the order of $\bar\theta(\tau)$ is preserved over $\tau\in[0,1)$,
 where $\bar\theta(\tau)$ is the update rule (\ref{eq:NR-cont}) of the continuous-time HGD algorithm.
\end{theorem}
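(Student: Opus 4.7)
I would reformulate the flow in the expectation (mean-value) coordinates of the exponential family, where the continuous-time Newton trajectory becomes a straight line segment, and then invoke Lemma~\ref{lemma:order-preserved} twice.

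Write $T(\theta) := \mathop{\rm grad}\log C(\theta)$. Since $\ell(\theta) = N(\sum_i\theta_is_i - \log C(\theta))$, one has $\mathop{\rm grad}\ell(\theta) = N(s - T(\theta))$. Substituting into the identity (\ref{eq:NR-cont-grad}) gives
\[
 T(\bar\theta(\tau)) \;=\; \tau\,s + (1-\tau)\,T(\theta^{(0)}) \;=:\; \eta(\tau),
\]
so $T\circ\bar\theta$ traces the line segment from $T(\theta^{(0)})$ to $s$ in $\eta$-space. Because the likelihood equation of the exponential family is $T(\hat\theta) = s$, this identifies $\bar\theta(\tau)$ as the unique MLE of the Bingham family for a (virtual) dataset whose sufficient-statistic vector is $\eta(\tau)$; uniqueness holds throughout because $\eta_i(\tau) > 0$ (both $s_i > 0$ by hypothesis and $T_i(\theta^{(0)}) = E_{\theta^{(0)}}[x_i^2] > 0$).

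Next I verify that, for every $\tau\in[0,1)$, the vector $\eta(\tau)$ carries exactly the same order-and-multiplicity pattern as $s$, so that applying Lemma~\ref{lemma:order-preserved} to the MLE $\bar\theta(\tau)$ of $\eta(\tau)$ yields the desired conclusion. At $\tau=1$, $\eta(1)=s$ trivially. At $\tau=0$, the body of the proof of Lemma~\ref{lemma:order-preserved} gives $s_i < s_j \Leftrightarrow \hat\theta_i < \hat\theta_j$ and $s_i = s_j \Leftrightarrow \hat\theta_i = \hat\theta_j$; applied to $\theta^{(0)}$, itself the MLE for the sufficient statistics $T(\theta^{(0)})$, this shows that $T(\theta^{(0)})$ has precisely the same pattern as $\theta^{(0)}$ and hence, by hypothesis, as $s$. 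Since $\eta(\tau)$ is a convex combination of two vectors sharing an identical pattern, equalities are preserved ($a_i = a_j$ and $b_i = b_j$ give $\eta_i(\tau) = \eta_j(\tau)$) and strict inequalities remain strict ($a_i < a_j$ and $b_i < b_j$ give $\eta_i(\tau) < \eta_j(\tau)$). A second application of Lemma~\ref{lemma:order-preserved} to the MLE $\bar\theta(\tau)$ of the data $\eta(\tau)$ then delivers the theorem.

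The conceptual core, and the only real obstacle, is the translation of the continuous-time Newton flow into affine motion in mean parameters; after that identification is in hand, the two appeals to the preceding lemma are essentially mechanical. One minor technical subtlety is that the convex-combination argument genuinely requires \emph{both} endpoints to share an identical tie pattern; the assumption that $\theta^{(0)}$ already matches the multiplicities of $s$ is precisely what prevents $\eta(\tau)$ from acquiring or losing ties at isolated values of $\tau$ and so from breaking order preservation.
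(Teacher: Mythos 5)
Your proposal is correct and follows essentially the same route as the paper's own proof: pass to the expectation parameter, use (\ref{eq:NR-cont-grad}) to show the trajectory is the line segment from $\eta(\theta^{(0)})$ to $s$, and apply Lemma~\ref{lemma:order-preserved} twice (once to match the pattern of $\theta^{(0)}$ with that of $\eta(\theta^{(0)})$, once to transfer the pattern of $\eta(\bar\theta(\tau))$ back to $\bar\theta(\tau)$). Your added remarks on positivity of $\eta(\tau)$ and on why a convex combination of two vectors with identical tie patterns preserves that pattern make explicit two points the paper leaves implicit, but the argument is the same.
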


\begin{proof}
 Let $\eta(\theta)=\mathop{\rm grad}\log C(\theta)$ be the expectation parameter.
 \SEI{Then the update rule (\ref{eq:NR-cont-grad}), equivalent to (\ref{eq:NR-cont}), is written as
  \[
  s - \eta(\bar{\theta}(\tau)) = (1-\tau) \{s - \eta(\theta^{(0)})\},
  \]
  or
  \begin{align}
  \eta(\bar{\theta}(\tau)) = (1-\tau) \eta(\theta^{(0)}) + \tau s.
  \label{eq:eta-tau}
  \end{align}
  This is the line segment connecting $\eta(\theta^{(0)})$ and $s$.
  From the argument of exponential families,
  $\theta^{(0)}$ is considered as the MLE when the sufficient statistic is $\eta(\theta^{(0)})$.
  Therefore, by Lemma~\ref{lemma:order-preserved}, the order of $\theta^{(0)}$ coincides with $\eta(\theta^{(0)})$.
  By (\ref{eq:eta-tau}) and the assumption on the order of $\theta^{(0)}$,
  we deduce that the order of $\eta(\bar{\theta}(\tau))$ is preserved over $\tau\in[0,1)$.
  Finally, since $\bar{\theta}(\tau)$ is considered as the MLE when the sufficient statistic is $\eta(\bar{\theta}(\tau))$,
  the order of $\bar{\theta}(\tau)$ is the same as $\eta(\bar{\theta}(\tau))$ by Lemma~\ref{lemma:order-preserved} again.
  This proves the theorem.
 }
\end{proof}

\section{Implementation issues and numerical evidence}
\AK{In this section we focus on the performance of our method. Note that there are two numerical procedures which need to work here. One is the accurate initial condition of the PDE equation and secondly the accuracy of the solution obtained via the PDE machinery. The later is a standard issue in implementation of the relevant packages where controllable accuracy is possible. We focus on the first procedure here, that of accurate calculation of the initial values.}

\subsection{Initial values}

For the HG and HGD algorithms,
we need to compute $C(\theta^{(0)})$ and its derivatives
at an appropriate point $\theta^{(0)}$.
If $\theta^{(0)}$ is sufficiently small,
then \SEI{they are calculated by the following power series expansion with an appropriate truncation}.

\SEI{Let $\theta(\phi,d)=(\phi_1,\ldots,\phi_1,\ldots,\phi_q,\ldots,\phi_q)$
be a parameter vector,
where the multiplicity of $\phi_i$ is $d_i$ for each $i=1,\ldots,q$
and therefore $\sum_{i=1}^q d_i=p$.
Denote $\phi=(\phi_i)_{i=1}^q$ and $d=(d_i)_{i=1}^q$.
Then, by using the argument of \cite{KumeWood2007}, we have
\begin{align}
\label{eq:power-series}
 &C(\theta(\phi,d))
 = \frac{2\pi^{p/2}}{\prod_{i=1}^q\Gamma(\frac{d_i}{2})}
 \nonumber\\
 &\quad\times\sum_{k_1=0}^{\infty}\cdots\sum_{k_q=0}^{\infty}
 \frac{\phi_1^{k_1}\cdots\phi_q^{k_q}}{k_1!\cdots k_q!}
 \frac{\prod_{i=1}^q \Gamma(k_i+\frac{d_i}{2})}{\Gamma(\sum_{i=1}^q (k_i+\frac{d_i}{2}))}.
\end{align}
The derivatives are calculated by
\begin{align*}
 &\frac{\partial^{|j|}}{\partial\phi_1^{j_1}\cdots\partial\phi_q^{j_q}}C(\theta(\phi,d))
 \\
 &= \frac{\prod_{i=1}^q\Gamma(d_i+2j_i)}{\pi^{|j|}\prod_{i=1}^q\Gamma(d_i)}C(\theta(\phi,d+2j)).
\end{align*}
for any index $j=(j_1,\ldots,j_q)$, where $|j|=\sum_{i=1}^qj_i$
(\cite{KumeWood2007}).
Let
\begin{align*}
&C_N(\theta(\phi,d))
= \frac{2\pi^{p/2}}{\prod_{i=1}^q\Gamma(\frac{d_i}{2})}
\\
&\quad
 \times\sum_{k_1+\cdots+k_q< N}\frac{\phi_1^{k_1}\cdots\phi_q^{k_q}}{k_1!\cdots k_q!}
 \frac{\prod_{i=1}^q\Gamma(k_i+\frac{d_i}{2})}{\Gamma(\sum_{i=1}^q (k_i+\frac{d_i}{2}))}.
\end{align*}
}According to \cite{yama4},
we have
\SEI{
\begin{align}
\label{eq:accuracy-1}
&|C(\theta(\phi,d)) - C_N(\theta(\phi,d))|
\nonumber\\
&\quad \leq \frac{\|\phi\|_1^N}{N!} \frac{N+1}{N+1-\|\phi\|_1}C(0).
\end{align}
where $\|\phi\|_1=\sum_{i=1}^q|\phi_i|$.
}See Appendix \ref{section:truncation} for details.

Table~\ref{table:universal-value} shows the computed values of $\tilde{G}(\theta)$
(see Corollary~\ref{corollary:Pfaffian-Bingham-tilde} for the definition of $\tilde{G}(\theta)$)
at $\theta=((p-i)/2p)_{i=1}^p$ for each dimension $p$.
The truncation number $N$ is selected
to make the right hand sides of (\ref{eq:accuracy-1})
less than \AK{some required accuracy $\varepsilon$. In our numerical examples we have chosen $\varepsilon$ to be} $10^{-6}$.

\begin{table}[htbp]
\caption{A set of values $\tilde{G}(\theta)$ \SEI{normalized by $C(0)=2\pi^{p/2}/\Gamma(p/2)$} is shown,
where $\theta$ is $((p-i)/(2p))_{i=1}^p$.
The last column shows computational time [sec]
of the power series.}
\label{table:universal-value}
\begin{center}
 \begin{tabular}{|c|l|r|}
 \hline
  $p$& \multicolumn{1}{|c|}{\SEI{$\tilde{G}(\theta)/C(0)$}}& \multicolumn{1}{|c|}{time}\\
  \hline
  2& $(1.137579,\ 0.604270)$& 0.0\\
  3& $(1.185742,\ 0.421987,\ 0.394412)$& 0.0\\
  4& $(1.210162,\ 0.321833,\ 0.308437,\ 0.295857)$& 0.0\\
  5& $(1.224897,\ 0.259286,\ 0.251813,\ 0.244669,$& 0.2\\
  &$\ 0.237834)$&\\
  6& $(1.234745,\ 0.216746,\ 0.212168,\ 0.207741,$& 1.0\\
  &$\ 0.203460,\ 0.199319)$& \\
  7& $(1.241789,\ 0.186029,\ 0.183026,\ 0.180101,$& 5.3\\
  &$\ 0.177252,\ 0.174476,\ 0.171771)$& \\
  8& $(1.247075,\ 0.162847,\ 0.160774,\ 0.158744,$& 26.4\\
  &$\ 0.156756,\ 0.154810,\ 0.152903, 0.151036)$& \\
  9& $(1.251187,\ 0.144750,\ 0.143260,\ 0.141795,$&  127.5\\
  &$\ 0.140356,\ 0.138941,\ 0.137550,\ 0.136182,$&\\
  &$\ 0.134837)$&\\
10& $(1.254477,\ 0.130242,\ 0.129136,\ 0.128045,$& 589.5\\
  &$\ 0.126970,\ 0.125910,\ 0.124866,\ 0.123836,$& \\
  &$\ 0.122821,\ 0.121820)$& \\
   \hline
 \end{tabular}
\end{center}
\end{table}

The power series expansion (\ref{eq:power-series})
is \AK{clearly} valid for every $\theta$ \AK{while the number of terms needed for the series approximation will be heavily dependent on the norm of the parameter $\theta$.}
\AK{Hence}, if the norm of $\theta$ is large,
then the truncation number $N$ that assures \AK{the required}  accuracy \AK{is likely to  become intolerably} large.
\AK{Both} HG and HGD methods \AK{are very} useful for this situations.
Specifically, \AK{in order} to compute the values of $C(\theta)$ for large $\theta$,
we first compute $\tilde{G}(\theta^{(0)})$ for sufficiently small $\theta^{(0)}$
and then apply the HG method with $\theta^{(1)}=\theta$.
To avoid singular points of the Pfaffian system, the order of the components of $\theta^{(0)}$
(including \AK{multiplicities}) should coincide with \AK{those} of $\theta$.

The initial point $\theta^{(0)}$ of the HG algorithm can be arbitrarily selected
independent of $\theta^{(1)}$
\AK{as long as} it avoids the singular points.
In other words, \AK{we can run the HG algorithm for any possible choice of the generic parameter $\theta^{(1)}$,
while we keep the initial value fixed at a specific point $\theta^{(0)}$.
Hence for a given $p$, it suffices or the HG algorithm to provide in tabular form the relevant values $G(\theta_0)$ for some fixed $\theta^{(0)}$.  In Table~\ref{table:universal-value} we show
all necessary values for $2\leq p\leq 10$ for the generic case. Note that $C(\theta)=C(\pi\theta)$ for any permutation $\pi$.}

\subsection{The HG method} \label{subsection:HG}


Table~\ref{table:HG-timing} compares
computational time of
the power series expansion and the HG method.
The parameter values examined are
$\theta=(a(p-i)^b)_{i=1}^p$ for several $p$, $a$ and $b$.
For the HG method,
the initial point is selected to $\theta^{(0)}=\theta/2\|\theta\|_1$,
which means $\|\theta^{(0)}\|_1=1/2$,
and the initial value $\tilde{G}(\theta^{(0)})$ is calculated from the power series expansion.
For each case in Table~\ref{table:HG-timing},
numerical values of $C(\theta)$ computed by the two methods
coincide up to $10^{-6}$
whenever the power series method returns the value
in a practical time.

If $\theta$ is so large that the power series method fails,
then we examine a logarithmic version of the Pfaffian system
to confirm numerical accuracy indirectly.
Let $G^L = (\partial_1\log C,\ldots,\partial_p\log C)=G/C$.
Then $G^L$ satisfies the following non-linear system:
\begin{align}
\label{eq:logarithmic}
 \partial_i G^L 
 = (P_i G^L)_j - G_i^L G_j^L
\end{align}
for $i,j=1,\ldots,p$, where $P_i$ is the Pfaffian matrix
defined by (\ref{eq:Pfaffian-1}) and (\ref{eq:Pfaffian-2}).
This system is numerically more stable than that of $G$
because $G$ becomes quite large if $\theta$ is large.
Figure~\ref{fig:hg} is the trajectory of
$G(\bar{\theta}(\tau))$ and $G^L(\bar{\theta}(\tau))$ against $\tau\in[0,1]$, respectively,
when $\theta^{(0)} = (0.4,0.1,0)$ and $\theta^{(1)}=(20,5,0)$.

\begin{table}[htbp]
\caption{Computational time [sec] of the PS (power series) and HG algorithms.
The parameter values examined are $\theta=(a(p-i)^b)_{i=1}^p$.
The symbol NA means that the PS method did not return
an output in a practical time.
For such cases, the logarithmic version (\ref{eq:logarithmic}) is used
to confirm numerical accuracy.
}
\label{table:HG-timing}
\begin{center}
 \begin{tabular}{|ccc|r|r|r|r|}
  \hline
  $p$& $a$& $b$& \multicolumn{1}{|c|}{$\|\theta\|_1$}&\multicolumn{1}{|c|}{\SEI{$C(\theta)/C(0)$}}& \multicolumn{1}{|c|}{PS}& \multicolumn{1}{|c|}{HG}\\\hline
  $5$& $1/20$& $1$& $1/2$& 1.105961& 0.1& 0.3\\
  $5$& $1/10$& $1$& 1& 1.224897& 0.2& 0.3\\
  $5$& $1$& $1$& 10&9.769432& 17.1& 0.3\\
  $5$& $10$& $1$& 100&$3.824\times 10^{14}$& NA& 0.3\\
  $5$& $1/60$& $2$& $1/2$& 1.106713& 0.1& 0.3\\
  $5$& $1$& $2$& 30& $5.253880\times 10^4$& 48.6& 0.3\\
  \hline
$10$& $1/90$& $1$& $1/2$& 1.051360& 14.0& 14.8\\
$10$& $1/45$& $1$& 1& 1.105546& 49.7& 14.7\\
$10$& $2/45$& $1$& $2$& 1.223062& 386.2& 14.6\\
$10$& $1$& $1$& $45$& $1.757059\times 10^2$& NA& 14.6\\
$10$& $1/570$& $2$& $1/2$& $1.051466$& 13.9& 14.1\\
$10$& $1$& $2$& $285$& $3.802\times 10^{28}$& NA& 15.2\\
  \hline
 \end{tabular}
\end{center}
\end{table}

\begin{figure}[htbp]
 \begin{center}
 \begin{tabular}{c}
  \includegraphics[width=0.4\textwidth]{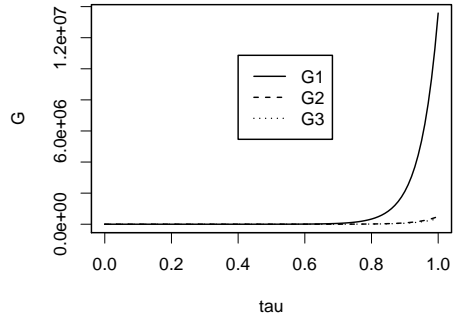}\\
  (a) Trajectory of $G(\bar{\theta}(\tau))$.\\
  \includegraphics[width=0.4\textwidth]{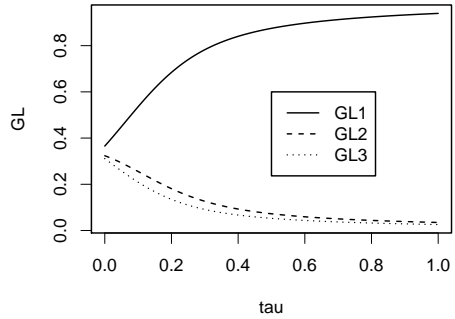}\\
  (b) Trajectory of $G^L(\bar{\theta}(\tau))$.
  \end{tabular}
 \end{center}
 \caption{Trajectory of $G(\bar{\theta}(\tau))$ and $G^L(\bar{\theta}(\tau))$ against $\tau\in[0,1]$,
 where $\bar{\theta}(\tau)$ is the line segment connecting
 $\theta^{(0)}=(0.4,0.1,0)$ and $\theta^{(1)}=(20,5,0)$.}
 \label{fig:hg}
\end{figure}

\subsection{Comparison with the saddle point approximation}

\AK{
One method which is used for likelihood inference on Bingham distributions is based on the saddle point approximations.} The first order saddle point approximation of (\ref{eq:Bingham-1dim}) is
\begin{align}
 C(\theta)
 &\approx
 \frac{1}{\sqrt{2\pi}} \frac{1}{\prod_{k=1}^p\sqrt{-\theta_k-t_*}}
 \nonumber\\
 &\left\{\sum_{\ell=1}^p\frac{1}{(-\theta_{\ell}-t_*)^2}\right\}^{-1/2}e^{-t_*},
 \label{eq:saddle}
\end{align}
where $t_*=t_*(\theta)$ is the unique solution of
\begin{align}
 \sum_{k=1}^p\frac{1}{-\theta_k-t_*} = 1
 \quad \mbox{and} \quad t_*<\min_k(-\theta_k).
\end{align}
\AK{Second order approximations are improvements of the one above (see \citep{KumeWood2005} for more details). It is shown however, that these improved versions are adequate for many practical applications since it takes a very large sample size so that the MLE estimates differ significantly from those of saddle point approximations. Another nice feature of the saddle point approximation is that the whole method is fast and it involves only a single one dimensional optimization procedure.  
In Tables ~\ref{tab:sadd:compl:comp:1} and ~\ref{tab:sadd:compl:comp:2} we compare the \AKL{second order }saddle point approximation of the normalizing constant with HG algorithm. We also compare both of these methods in the cases of Complex Bingham distributions, whose normalizing constant is known in closed form. As can be seen the HG algorithm performs well and is exact in the complex Bingham case.}
\begin{table}[ht]
\caption{Columns 2 and 3 in the table compare the saddle point approximations (spa) of \SEI{$\theta=(0,-1,-2,-\kappa)$} with that of hg algorithm; columns 4 and 5 compare the same quantities for \SEI{$\theta=(0,-1,-2,-\kappa,-\kappa)$} and the last three columns compare respectively the saddle point, exact and hg of the complex Bingham with parameters \SEI{$\phi=(0,-1,-2,-\kappa)$}}
\label{tab:sadd:compl:comp:1}
\begin{center}
\begin{tabular}{|r|rr|rr|rrr|}
  \hline
 $\kappa$ & spa & hg & spa & hg & spa & ex & hg \\ 
  \hline
5 & 4.237006 & 4.238950 & 3.376766 & 3.372017 & 5.942975 & 5.936835 & 5.936835 \\ 
  10 & 2.982628 & 2.985576 & 1.689684 & 1.689355 & 3.429004 & 3.425468 & 3.425468 \\ 
  30 & 1.708766 & 1.711919 & 0.555494 & 0.556123 & 1.248280 & 1.246421 & 1.246421 \\ 
  50 & 1.321178 & 1.323994 & 0.332102 & 0.332661 & 0.761347 & 0.760180 & 0.760180 \\ 
  100 & 0.932895 & 0.935094 & 0.165587 & 0.165940 & 0.385272 & 0.384675 & 0.384675 \\ 
  200 & 0.659185 & 0.660814 & 0.082676 & 0.082871 & 0.193779 & 0.193477 & 0.193477 \\ 
   \hline
\end{tabular}
\end{center}
\end{table}
\begin{table}[ht]
\caption{Columns 2 and 3 in the table compare the saddle point approximations (spa) of \SEI{$\theta=(0,-1,-22,-\kappa)$}; columns 4 and 5 compare the same quantities for \SEI{$\theta=(0,-1,-22,-\kappa,-\kappa)$} and the last three columns compare respectively the saddle point, exact and hg of the complex Bingham with parameters \SEI{$\phi=(0,-1,-22,-\kappa)$}}
\label{tab:sadd:compl:comp:2}
\begin{center}
\begin{tabular}{|r|rr|rr|rrr|}
  \hline
$\kappa$& spa & hg & spa & hg & spa & ex & hg \\
  \hline
5 & 1.258672 & 1.273161 & 1.032128 & 1.044072 & 0.921027 & 0.921726 & 0.921726 \\ 
  10 & 0.874523 & 0.883394 & 0.500707 & 0.505223 & 0.506236 & 0.506341 & 0.506341 \\ 
  30 & 0.497757 & 0.503213 & 0.162251 & 0.163901 & 0.177602 & 0.177495 & 0.177495 \\ 
  50 & 0.384440 & 0.388775 & 0.096784 & 0.097828 & 0.107526 & 0.107458 & 0.107458 \\ 
  100 & 0.271249 & 0.274375 & 0.048182 & 0.048725 & 0.054115 & 0.054081 & 0.054081 \\ 
  200 & 0.191595 & 0.193826 & 0.024039 & 0.024316 & 0.027144 & 0.027127 & 0.027127 \\ 
   \hline
\end{tabular}
\end{center}
\end{table}

\subsection{The HGD method}


Table~\ref{table:HGD-timing} compares
computational time of the discrete- and continuous-time HGD methods.
The data is $s=(2i / p(p+1))_{i=1}^p$ for each $p$.
The initial point $\theta^{(0)}=(\theta_1^{(0)},\ldots,\theta_p^{(0)})$ is selected such that
the order of $(s_1,\ldots,s_p)$ coincides with the order of $\theta^{(0)}$.
This avoids singularity.
The numerical error of the MLE $\hat{\theta}$ is evaluated by
$\max_{1\leq i\leq p} |\partial_i\log C(\hat\theta) - s_i|$,
which must be zero if $\hat\theta$ is correct.

\begin{table}[htbp]
 \caption{
 Computational time [sec] and numerical error
 of the HGD algorithm are shown.
 The data is $s=(2i/p(p+1))_{i=1}^p$ for each $p$.
 The error is evaluated by $\max_i |\partial_i\log C(\hat\theta) - s_i|$.
 Here $\partial_i\log C(\hat\theta)$ is obtained
by the power series expansion with accuracy $10^{-8}$
for $p\leq 5$.
Just for information, the error evaluated by the HG method
is displayed for $p\geq 6$.
}
 \label{table:HGD-timing}
 \begin{center}
  \begin{tabular}{ccccc}
  $p$& error& error& time& time\\
  & (discrete)& (continuous)& (discrete)& (continuous)\\
   \hline
 2& 2.41e-07& 1.04e-08& 0.01& 0.16\\
 3& 6.53e-07& 1.81e-08& 0.02& 0.21\\
 4& 5.40e-07& 1.41e-08& 0.05& 0.37\\
 5& 1.45e-06& 1.78e-08& 0.14& 0.60\\
 6& (1.69e-06)& (1.09e-08)& 0.44& 0.99\\
 7& (2.76e-06)& (1.17e-08)& 1.15& 1.76\\
 8& (6.14e-06)& (1.29e-08)& 2.82& 3.81\\
 9& (1.65e-05)& (2.29e-08)& 6.60& 7.64\\
10& (1.69e-05)& (2.06e-08)& 14.1& 15.7\\
   \hline
  \end{tabular}
 \end{center}
\end{table}

Figure~\ref{fig:hgd} shows
an example of the trajectories of $\theta^{(k)}$ (for discrete algorithm)
and $\bar{\theta}(\tau)$ (for continuous algorithm).
The data is
\begin{align*}
 s & = \left(\frac{1}{15}, \frac{2}{15}, \frac{3}{15}, \frac{4}{15}, \frac{5}{15}\right)
\end{align*}
and the MLE computed by the continuous-time HGD algorithm is
\begin{align*}
 \hat\theta &= (-7.188333,\ -3.120184,\ -1.543555,\ -0.628081,\ 0).
\end{align*}
The initial parameter is
\begin{align*}
 \theta^{(0)} &= \left( -\frac{4}{20},-\frac{3}{20}, -\frac{2}{20}, -\frac{1}{20}, 0 \right).
\end{align*}

\begin{figure}[htbp]
 \begin{center}
 \begin{tabular}{c}
  \includegraphics[width=0.4\textwidth]{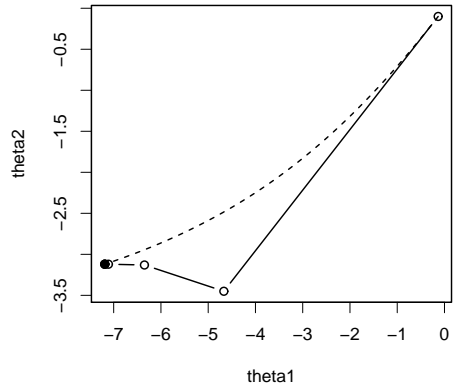}\\
  (a) $(\theta_1,\theta_2)$-plane.\\
  \includegraphics[width=0.4\textwidth]{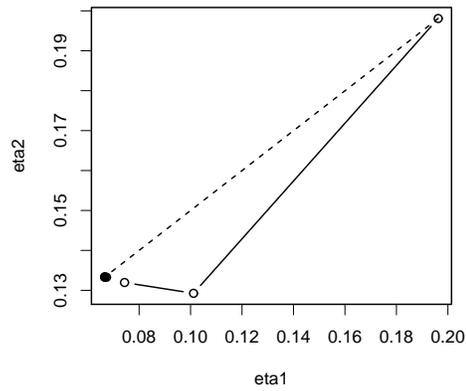}\\
  (b) $(\eta_1,\eta_2)$-plane.
  \end{tabular}
 \end{center}
 \caption{Trajectories of $\theta^{(k)}$ \SEI{for discrete algorithm (white circles)} and $\bar{\theta}(\tau)$
 for continuous algorithm \SEI{(dashed line)}: (a) $(\theta_1,\theta_2)$-plane
 and (b) $(\eta_1(\theta),\eta_2(\theta))$-plane,
 where $\eta_i(\theta)=\partial_i\log C(\theta)$ denotes
 the expectation parameter.
 \SEI{The black circle denotes the MLE.}}
 \label{fig:hgd}
\end{figure}

\section{Discussion}
\AK{
In this paper, we show that it is possible to perform statistical inference on Bingham distributions based on the exact maximum likelihood method. This is due to the fact that the normalising constants can be calculated accurately using the standard theory of holonomic functions. The only requirement for the algorithm to generate the correct value is to start from some exact initial point of the curve along which the final solution located. In our examples, the Taylor expansion method can be easily utilized to generate an acceptable starting point. For example, as shown in Section~\ref{subsection:HG}, one starting point could be $c(\theta/r)$ where  $r$ is such that the entries of the rescaled  vector $\theta/r$ are so small so that the Taylor expansion estimation can be very accurate at $\theta/r$. Alternatively, one could use the starting points given in Table~\ref{table:universal-value}. While the method proposed is rather more computationally demanding than the saddle point approximation, it is in fact very fast in the R package implementations and behaves well even for extreme values of the parameter $\theta$.  We also show how the algorithms can be easily adopted in the degenerate cases of multiplicities in the parameter vector $\theta$.
}

\appendix

\section*{Appendices}

\section{One-dimensional representation}

 First we briefly describe derivation of the one-dimensional representation (\ref{eq:Bingham-1dim})
 according to Kume and Wood (2005).
 Note that the parameter $\lambda_i$ in their paper
 is our $-\theta_i$.
 Consider $p$ independent normal random variables $x_i\sim N(0,(-2\theta_i)^{-1})$,
 where $\theta_i<0$ for all $i$.
 Then the marginal density of $r=\sum_{i=1}^px_i^2$ is directly calculated as
 \SEI{
 \begin{align}
  f(r) = \frac{r^{p/2-1}\prod_k\sqrt{-\theta_k}}{\Gamma(p/2)}\frac{C(r\theta)}{C(0)}.
  \label{eq:appendix-1}
 \end{align}
 }
 On the other hand, the characteristic function of $r$ is
 \begin{align}
  \phi(s)
  = E[e^{is\sum_ix_i^2}]
  = \prod_k\sqrt{\frac{-\theta_k}{-\theta_k-is}}.
  \label{eq:appendix-2}
 \end{align} 
 In general, the density function is represented by its characteristic function as
 \begin{align}
  f(r) = \frac{1}{2\pi}\lim_{\epsilon \to 0}\int_{-\infty}^{\infty}
  \phi(s)e^{-is r-\epsilon s^2/2}ds
  \label{eq:appendix-3}
 \end{align}
 at arbitrary continuous point of $f$
 (see e.g.\ \cite{Feller2}).
 By combining the equations (\ref{eq:appendix-1}) to (\ref{eq:appendix-3}), we have
 \SEI{
 \begin{align*}
  C(\theta)
  &= \frac{\Gamma(p/2)C(0)}{\prod_k\sqrt{-\theta_k}}f(1)
  \\
  &= \frac{C(0)}{2\pi}\lim_{\epsilon\to 0}\int_{-\infty}^{\infty}
  \frac{1}{\prod_k\sqrt{-\theta_k-is}}e^{-is-\epsilon s^2/2}ds.
 \end{align*}
 }Note that this expression holds for any $p\geq 2$.
 If $p\geq 3$, then
 \SEI{
 \begin{align*}
  C(\theta)
  &= \frac{C(0)}{2\pi}\int_{-\infty}^{\infty}
  \frac{1}{\prod_k\sqrt{-\theta_k-is}}e^{-is}ds
 \end{align*}
 }since $|\prod_{k=1}^p(-\theta_k-is)^{-1/2}|$ is integrable over $(-\infty,\infty)$.
 By analytic continuation with respect to $s$,
 we obtain
 \SEI{
 \begin{align}
  C(\theta)
  &= \frac{C(0)}{2\pi}\int_{-\infty}^{\infty}
  \frac{1}{\prod_k\sqrt{-\theta_k-t_0-is}}e^{-t_0-is}ds
  \label{eq:appendix-exact2}
 \end{align}
 }for any real number $t_0$ less than $\min_k(-\theta_k)$.
 Even if some $\theta_i$'s are not negative,
 the equation (\ref{eq:appendix-exact2}) still holds 
 due to analytic continuation with respect to $\theta$,
 as long as $t_0<\min_k(-\theta_k)$.
 Hence we obtain (\ref{eq:Bingham-1dim}).

\section{Truncation error of the power series} \label{section:truncation}

We derive the power series expansion of $C(\theta)$
and evaluate the truncation error according to \cite{yama4}.
\SEI{Let $\theta(\phi,d)=(\phi_1,\ldots,\phi_1,\ldots,\phi_q,\ldots,\phi_q)$ be a parameter vector
with multiplicities $(d_1,\ldots,d_q)$.}
\SEI{By \cite{KumeWood2007}},
we have
\SEI{
\begin{align*}
 &C(\theta(\phi,d))
 \\
 &= \int_{S^{p-1}} e^{\sum_{i=1}^p \theta_i(\phi,d) x_i^2} dx
 \\
 &= C(0)\sum_{k_1=0}^{\infty}\cdots\sum_{k_q=0}^{\infty}
 \frac{\phi_1^{k_1}\cdots\phi_q^{k_q}}{k_1!\cdots k_q!}
 \frac{\prod_i \Gamma(k_i+d_i/2)}{\Gamma(\sum_i (k_i+d_i/2))}
 \frac{\Gamma(\sum_i d_i/2)}{\prod_i\Gamma(d_i/2)}.
\end{align*}
}

Let
\SEI{
\begin{align*}
 &C_N(\theta(\phi,d))
 \\
 &:= C(0)
 \sum_{k_1+\cdots+k_q < N}
 \frac{\phi_1^{k_1}\cdots\phi_q^{k_q}}{k_1!\cdots k_q!}
 \frac{\prod_i \Gamma(k_i+d_i/2)}{\Gamma(\sum_i (k_i+d_i/2))}
 \frac{\Gamma(\sum_i d_i/2)}{\prod_i\Gamma(d_i/2)}.
\end{align*}
}

Then the truncation error is evaluated as
\SEI{
\begin{align*}
 &\frac{|C(\theta(\phi,d)) - C_N(\theta(\theta,d))|}{C(0)}
\\
 &\leq \sum_{k_1+\cdots+k_q\geq N}
 \frac{|\phi_1|^{k_1}\cdots|\phi_q|^{k_q}}{k_1!\cdots k_q!}
 \frac{\prod_i\Gamma(k_i+d_i/2)}{\Gamma(\sum_i(k_i+d_i/2))}
 \frac{\Gamma(\sum_i d_i/2)}{\prod_i\Gamma(d_i/2)}
 \\
 &\leq \sum_{k_1+\cdots+k_q\geq N}
 \frac{|\phi_1|^{k_1}\cdots|\phi_q|^{k_q}}{k_1!\cdots k_q!}
 \\
 &= \sum_{n=N}^{\infty}
 \frac{1}{n!}\sum_{k_1+\cdots+k_q=n}
 \frac{n!}{k_1!\cdots k_q!}
 |\phi_1|^{k_1}\cdots|\phi_q|^{k_q}
 \\
 &= \sum_{n=N}^{\infty}\frac{\|\phi\|_1^n}{n!}
 \\
 &\leq \frac{\|\phi\|_1^N}{N!}
 \sum_{n=N}^{\infty}\frac{\|\phi\|_1^{n-N}}{(N+1)^{n-N}}
 \\
 &= \frac{\|\phi\|_1^N}{N!}
 \frac{N+1}{N+1-\|\phi\|_1}.
\end{align*}
}

\section*{Acknowledgments}

The first author is supported by JSPS
Institutional Program for Young Researcher Overseas Visits.


\end{document}